\documentclass[11pt]{article}
\usepackage{chao}
\usepackage{subcaption}
\usepackage{mathtools}
\usepackage{todonotes}
\usepackage{arydshln,leftidx,mathtools}
\usepackage{thm-restate}
\usepackage{physics}

\newcommand{\F}{\mathbb{F}}

\title{Solving systems of linear equations through zero forcing set and application to lights out}
\DeclareMathOperator\supp{supp}
\DeclareMathOperator\spn{span}

\author{Jianbo Wang\footnote{ \email{jbwang962@gmail.com}, University of Electronic Science and Technology of China, School of Computer Science and Engineering.} \and Chao Xu\footnote{ \email{cxu@uestc.edu.cn}, University of Electronic Science and Technology of China, School of Computer Science and Engineering.} \and Siyun Zhou\footnote{ \email{zhousiyun@std.uestc.edu.cn}, University of Electronic Science and Technology of China, School of Mathematical Sciences.}}
\date{\today}
\usepackage{xspace}

\begin{document}
\maketitle

\begin{abstract}
Let $\mathbb{F}$ be any field, we consider solving $Ax=b$ repeatedly for a matrix $A\in\mathbb{F}^{n\times n}$ of $m$ non-zero elements, and multiple different $b\in\mathbb{F}^{n}$. If we are given a zero forcing set of $A$ of size $k$, we can then build a data structure in $O(mk)$ time, such that each instance of $Ax=b$ can be solved in $O(k^2+m)$ time. As an application, we show how the lights out game in an $n\times n$ grid is solved in $O(n^3)$ time, and then improve the running time to $O(n^\omega\log n)$ by exploiting the repeated structure in grids.
\end{abstract}

\section{Introduction}

Solving the linear system $Ax=b$ has been a fundamental problem in mathematics. We consider the problem in its full generality. Given a field $\F$, a matrix $A\in \F^{n\times n}$ and a vector $b\in \F^n$, we are interested to find an $x\in \F^n$ such that $Ax = b$. Additionally, a zero forcing set associated with $A$ is also given.

As arguably the most important algorithmic problem in linear algebra, there are many algorithms designed for solving systems of linear equations, we refer the reader to \cite{Golub2013-ly}.
However, most works are for matrices over the real, complex or rational number fields. As this work is concerned with general fields, we will describe some known general algorithms below.
The Gaussian elimination is a classic method for solving systems of linear equations, which requires $O(n^3)$ time. If the system is of full rank, one can reduce the problem to a single matrix multiplication, by taking $x=A^{-1}b$. In general cases, the more recent LSP decomposition, which is a generalization of the LUP decomposition, allows the running time to match the matrix multiplication \cite{ibarraGeneralizationFastLUP1982,Jeann.2006}. If one views the non-zeros of the matrix as an adjacency matrix of a graph, the property of the graph can be then used to speed up the algorithm. For example, if the graph is planar, then the nested-dissection technique can obtain a $O(n^{3/2})$ time algorithm when the field is real or complex \cite{lipton_generalized_1979}. Later, it was generalized to any non-singular matrix over arbitrary field, and the running time was also improved to $O(n^{\omega/2})$ \cite{Alon.Y2013},  where $\omega<2.3728596$ is the matrix multiplication constant \cite{matrixmultiplication}. These algorithms are fairly complicated, but share one similarity with our work in taking the advantage of properties of the graph to improve the running time of the algorithm. 

Zero forcing sets were first studied in \cite{AIMM.2008} on graphs relating to the maximum nullity of a matrix, and the results were later expanded to directed graphs \cite{Bario.F.H.H.H.V.S.S.S.S.S2009}. The zero forcing set captures some "core" information of the linear system: $Ax=b$ is uniquely determined by the values of $x$ on a zero forcing set. Therefore this implies often that one just have to observe part of $x$ to recover the entirety of $x$. This idea leads to independent discovery of zero forcing set by physicists for control of quantum systems \cite{PhysRevLett.99.100501,Severini_2008}. Later it was shown to be applicable to Phasor Measurement Units (PMU) to monitor power networks \cite{6062919}.
Most studies of zero forcing set are on the algebraic and combinatorial properties. On the computational front, finding the smallest zero forcing set of an undirected graph is not known to be solvable in polynomial time, and it is likely NP-hard. The related problem of finding the minimum 0-1-weighted zero forcing set in an undirected graph is hard \cite{AazamiAshkan2008}. For directed graphs, finding the zero forcing set is NP-hard \cite{TREFOIS2015199}. Exact algorithms for finding the smallest zero forcing set was considered, see \cite{RePEc:eee:ejores:v:273:y:2019:i:3:p:889-903} for a survey.

This work was inspired by an algorithm for the lights out game. In the game, there is a light on each vertex of the graph, and each light is in a state either on ($1$) or off ($0$). There is also a button on each vertex. Pressing the button would flip the state of the light of itself and all its (out-)neighbors. The goal is to turn off all the lights. 
The lights out game is equivalent to solving a system of linear equations $Ax+b=0$ in $\F_2$, where $A$ is the adjacency matrix of the graph $G$, and $b$ is the state of the lights and $\F_2$ is the finite field of order $2$. The interpretation is that $x_v=1$ means pressing the button at vertex $v$, and $b_v$ is the initial state of the light at vertex $v$. Note in $\F_2$, $-b=b$, hence it is equivalent to solving $Ax=b$. There is a large amount of research in the lights out game, see \cite{Fleischer2013} for a survey. Finding a solution of the lights out game with the minimum number of button presses is NP-hard \cite{Berma.B.H2021}.

We focus on the case when $G$ is an $n\times n$ grid, which corresponds to an $n^2\times n^2$ matrix. Gaussian elimination would take $O(n^6)$ time. An alternative approach is the light-chasing algorithm \cite{Leach.2017}. Since the lights in the first row uniquely affect the states of the remaining rows, one can then look up which action on the first row should be taken according to the states of the last row. However, the literature does not provide the 
strategy for finding the lookup table. Wang claimed there is a $O(n^3)$ time algorithm that given the state of the last row, the state of the first row can be found through solving a linear equation of an $n$-square matrix \cite{wang}. Wang's result is the motivation behind this work. However, there is no proof of its correctness. If the adjacency matrix has full rank, then by nested dissection, there is a randomized Las Vegas algorithm with running time of $O(n^\omega)$ \cite{Alon.Y2013}.  Unfortunately, there are infinite many singular adjacency matrices \cite{HUNZIKER2004465}, the list of such matrices can be found in OEIS A117870 \cite{oeisA117870}. A SAGE package  for computing the solution of the $n\times n$ grid lights out game when all the lights are on is provided in \cite{scanlightsout}. 

\paragraph{Our contribution}
We generalize Wang's method for the lights out game to solving an arbitrary system of linear equations $Ax=b$, and give a formal proof of its correctness. We define a structure, the core matrix $B$ of $A$, such that one can solve the system of linear equations over $B$ instead of $A$, and then lift it to a solution of $A$ in linear time. As a consequence, we obtain the following algorithmic result.

\begin{restatable}[Main]{theorem}{main}
\label{thm:main}
Given a matrix $A\in\mathbb{F}^{n\times n}$ of $m$ non-zero elements, and a zero forcing set of $A$ of size $k$ where $k\leq n$. A data structure of size $O(k^2)$ can be computed in $O(mk)$ time, such that for each $b\in\mathbb{F}^{n}$, solving $Ax=b$ takes $O(k^2+m)$ time.
\end{restatable}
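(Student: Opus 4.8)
The plan is to turn the given zero forcing set into a forcing order, use that order to expose an invertible (in fact triangular) subsystem of $A$, and then eliminate the forced variables by a Schur complement, leaving a $k\times k$ core matrix $B$ on the variables indexed by the zero forcing set $Z$. Concretely, let $Z$ be the forcing set ($|Z|=k$) and $W=\{1,\dots,n\}\setminus Z$. I would first run the forcing process once: each $w\in W$ is blackened by exactly one equation, and since a vertex can force at most once (after it forces its unique white out-neighbor it has none left), the $n-k$ \emph{forcing rows} are distinct. Collect them into a set $F$ with $|F|=n-k$, let $U$ be the remaining $k$ rows, and reorder rows as $(F,U)$ and columns as $(W,Z)$ so that
\[
A \;\longrightarrow\; \begin{pmatrix} A_{FW} & A_{FZ} \\ A_{UW} & A_{UZ} \end{pmatrix}.
\]

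The structural heart of the argument, and the step I expect to be the main obstacle, is to show that $A_{FW}$ is invertible. I would prove that, under the simultaneous ordering of $F$ and $W$ induced by the forcing order, $A_{FW}$ is lower triangular with nonzero diagonal. This is exactly the forcing rule read backwards: when forcing row $f_t$ blackens $w_t$, the vertex $w_t$ is the \emph{unique} white out-neighbor of $f_t$ at that instant, so every not-yet-forced $w_s$ with $s>t$ has $A_{f_t w_s}=0$, while $A_{f_t w_t}\neq 0$ gives a nonzero diagonal. Making this fully rigorous -- pinning down the correct simultaneous ordering and checking the triangular pattern in all cases, including diagonal entries (self-loops) and the directed-graph convention -- is the delicate part; everything downstream rests on it.

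Given invertibility of $A_{FW}$, the forcing equations determine $x_W=A_{FW}^{-1}(b_F-A_{FZ}x_Z)$, and substituting into the unused equations yields the core system $B\,x_Z=b_U-A_{UW}A_{FW}^{-1}b_F$ with $B=A_{UZ}-A_{UW}A_{FW}^{-1}A_{FZ}$, the Schur complement of $A_{FW}$. I would then show that the projection $x\mapsto x_Z$ is a bijection between the solutions of $Ax=b$ and the solutions of the core system: it is injective because $x_W$ is a function of $x_Z$, and surjective because any core solution lifts back. This makes the reduction solution-preserving (consistency, uniqueness, and the full solution space all transfer), and it handles rank-deficient $A$ automatically without a separate case analysis.

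For the complexity, building $C:=A_{FW}^{-1}A_{FZ}$ is forward substitution on the triangular system with $k$ right-hand sides, i.e.\ propagating $k$ coefficient vectors through the forcing order, at cost $O(mk)$; forming $B=A_{UZ}-A_{UW}C$ is another $O(mk)$. I factor $B$ once and store only the factorization, giving a persistent data structure of size $O(k^2)$ -- here I must verify that this one-time factorization stays within the $O(mk)$ preprocessing budget (it does in the regimes of interest, e.g.\ the grid where $k^2=\Theta(m)$). Per query I recompute the forcing order (or reuse it), evaluate $A_{FW}^{-1}b_F$ by one forcing pass and assemble the right-hand side in $O(m)$, solve the factored core system in $O(k^2)$, and lift to $x_W$ by a second forcing pass in $O(m)$, for $O(k^2+m)$ total; recomputing the $O(m)$-size lift on the fly rather than storing the dense $C$ is what keeps the persistent structure at size $O(k^2)$.
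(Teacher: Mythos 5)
Your proposal is correct, and the algorithm it yields is in substance the paper's own: your ``unused'' rows $U$ are exactly the paper's terminals $T$, a forcing pass is exactly forward substitution on your triangular system $A_{FW}$, and your Schur complement $B=A_{UZ}-A_{UW}A_{FW}^{-1}A_{FZ}$ is literally the paper's core matrix $(RA)_{T,Z}$; the $O(mk)$ preprocessing and $O(k^2+m)$ query costs match as well. The correctness argument, however, takes a genuinely different route. Your key structural lemma --- that ordering $F$ and $W$ simultaneously by the forcing order makes $A_{FW}$ lower triangular with nonzero diagonal (your sketch is sound, including the delicate point: $f_t=w_s$ can only happen for $s<t$, since $f_t$ must already be blue when it forces, so no unconstrained diagonal entry of $A$ lands in the upper triangle) --- is never isolated in the paper; the closest it comes is citing the linear independence of the columns in $V\setminus Z$. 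Instead, the paper proves that the forcing map $L$ and the residual $R=I-AL$ are linear (\autoref{lem:linear}), a rank-preservation lemma (\autoref{lem:rank}), and then a \emph{one-directional} lifting theorem (\autoref{thm:epic}): assuming $Ax=b$ is solvable, any solution of $By=R(b)_T$ lifts to a solution of $Ax=b$ agreeing with it on $Z$. Because that statement is conditional on solvability, the paper's \textsc{SolveLinearSystemGivenCore} must end with an explicit check $Ax=b$ to reject inconsistent systems; your Schur-complement bijection between the two solution sets gives both directions at once, so infeasibility is detected by the core system itself, no final verification is needed, and rank-deficient $A$ requires no case analysis --- a cleaner and slightly stronger conclusion that bypasses the paper's decomposition $[A_1'\; A_2'\; A'']$ and its rank bookkeeping. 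What the paper's heavier route buys is generality of statement: \autoref{thm:epic} holds for an arbitrary $M$ with $MA''=0$ and $\rank(MA_1')=\rank(A_1')$, though it is ultimately applied only with $M=R_{T,*}$. One caveat you share with the paper: the one-time factorization of $B$ costs $O(k^\omega)$, which fits inside the stated $O(mk)$ budget only when $m=\Omega(k^{\omega-1})$; the paper's proof of \autoref{thm:findcore} silently omits this LSP step, whereas you at least flag it explicitly.
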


We also show that the lights out game on an $n\times n$ grid can be solved in $O(n^{\omega}\log n)$ time by finding the core matrix using an alternative method.
Additionally, we prove some linear algebraic properties of the zero forcing set, which might be of independent interest.

\section{Preliminaries}

Define an index set $[n]=\{1,\ldots,n\}$. We consider an algebraic model, where every field operation takes $O(1)$ time, and each element in the field can be stored in $O(1)$ space.

It is useful to think about a vector $x$ indexed by elements $X$ is a function $x: X\to \F$. Hence, we write $x\in \F^X$. Define $\supp(x)$ to be the set of non-zero coordinates of $x$.

For a matrix $A$ and a set of row indices $R$ and a set of column indices $C$, we define $A_{R,C}$ to be the submatrix of elements indexed by the row and columns in $R$ and $C$, respectively. We use $*$ to mean the set of all row indices or column indices, depending on context. For example, $A_{R,*}$ to denote the submatrix of $A$, which is composed of the rows indexed by $R$. We define $A_{i,*}$ as the $i$th row vector of $A$. Similarly, $A_{*,j}$ is the $j$th column vector. $A_{i,j}$ is the element in the $i$th row and $j$th column. For a vector $x$, $x_i$ is the scalar at index $i$, and $x_I$ is the subvector of the elements in the index set $I$. 

Solving $Ax=b$ is finding a vector $x$ that satisfies $Ax=b$. For a directed graph $G=(V,E)$ on $n$ vertices, and a field $\F$, we define $\mathcal{M}(G,\F)$ to be the set of all matrices $A\in \F^{n\times n}$ such that for all $u\neq v$, $A_{u,v}\neq 0$ if and only if $(u,v)\in E$. We do not have any restriction on $A_{v,v}$. Let $N^+(u)=\{v \mid (u,v) \in E \}$ be the set of all out-neighbors of $u$.

We briefly review some basic concepts and results related to the zero forcing. Consider the process of coloring a graph. We start with a set of blue colored vertices, denoted by $Z$. If there exists a blue vertex $v$ with exactly one non-blue out-neighbor, say $u\in V$, then color $u$ blue. The operation of turning $u$ blue is called \emph{forcing}, and we say $u$ is forced by $v$. If this process ends up with a situation where all vertices are colored blue, then we say the initial set $Z$ is a \emph{zero forcing set}. The \emph{zero forcing number} $Z(G)$ of $G$ is the size of the smallest zero forcing set. We also call $Z$ a zero forcing set of $A$ if $A\in \mathcal{M}(G,\F)$ for some field $\F$. The following proposition gives the reason for the name zero forcing.

\begin{proposition}[\cite{AIMM.2008,Hogbe.2010}]
For a zero forcing set $Z$ of $A$, if $x\in \ker(A)$, then $x_Z=0$ implies $x=0$. Namely, $x$ vanishing at zero forcing set forces $x$ to be $0$.
\end{proposition}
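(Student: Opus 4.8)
The plan is to prove the statement by induction along the forcing sequence, tracking the claim ``every blue vertex $w$ satisfies $x_w=0$'' as the coloring process proceeds. First I would fix an order $u_1,u_2,\ldots$ in which the vertices outside $Z$ are forced blue, and for each $u_i$ let $v_i$ denote the blue vertex that forces it. Writing $B_i = Z\cup\{u_1,\ldots,u_i\}$ for the set of blue vertices after $i$ forcing steps, the goal reduces to showing $x_w=0$ for all $w\in B_i$ by induction on $i$. Since $Z$ is a zero forcing set, the process terminates with $B_{n-|Z|}=V$, which yields $x=0$ and completes the proof.

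The base case $i=0$ is immediate: $B_0=Z$ and $x_Z=0$ by hypothesis. For the inductive step, I would read off the equation corresponding to row $v_i$ of $Ax=0$. Because $A\in\mathcal{M}(G,\F)$, the off-diagonal nonzero entries of row $v_i$ are exactly those indexed by the out-neighbors $N^+(v_i)$, so the equation becomes
\[
0 = (Ax)_{v_i} = A_{v_i,v_i}\,x_{v_i} + \sum_{w\in N^+(v_i)} A_{v_i,w}\,x_w.
\]
At the moment $u_i$ is forced, $v_i$ is already blue and lies in $B_{i-1}$, so $x_{v_i}=0$ by the induction hypothesis, killing the diagonal term. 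The forcing condition guarantees that $u_i$ is the \emph{unique} non-blue out-neighbor of $v_i$, meaning every $w\in N^+(v_i)\setminus\{u_i\}$ belongs to $B_{i-1}$ and hence contributes $x_w=0$. The sum therefore collapses to the single term $A_{v_i,u_i}\,x_{u_i}=0$, and since $(v_i,u_i)\in E$ forces $A_{v_i,u_i}\neq 0$, I conclude $x_{u_i}=0$, extending the claim to $B_i$.

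I do not expect a serious obstacle here; the argument is essentially a careful bookkeeping of the forcing order. The only point that requires attention is that $\mathcal{M}(G,\F)$ imposes no restriction on the diagonal entry $A_{v_i,v_i}$, so one cannot control that term directly. The resolution, and the key structural observation, is that the forcing vertex $v_i$ is itself already blue, so $x_{v_i}=0$ makes the unknown diagonal entry irrelevant. This is precisely why the statement requires $v$ to be blue before it can force, and it is the crux that lets the induction go through despite the generality of the diagonal.
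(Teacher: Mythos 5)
Your proof is correct. The paper does not actually prove this proposition---it is stated with citations to \cite{AIMM.2008,Hogbe.2010}---and your induction along the forcing order, using that the forcing vertex $v_i$ is already blue (so $x_{v_i}=0$ neutralizes the unconstrained diagonal entry) and that $u_i$ is the unique non-blue out-neighbor (so the row equation collapses to $A_{v_i,u_i}x_{u_i}=0$ with $A_{v_i,u_i}\neq 0$), is precisely the standard argument given in those references. Your writeup correctly supplies the proof the paper omits, including the one genuinely delicate point: why the generality of the diagonal in $\mathcal{M}(G,\F)$ does not break the argument.
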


The converse is not true in general. Indeed, we can simply take $A$ to be a $2\times 2$ identity matrix. Then, $Z=\{1,2\}$ is the unique zero forcing set of $A$, and $\ker(A)=\{0\}$. If $x\in \ker(A)$ and $x_1=0$, then we have $x=0$. But $\{1\}$ is clearly not a zero forcing set.

The order of forces for a zero forcing set $Z$ is not unique, although the final coloring is \cite{AIMM.2008,Hogbe.2010}. For simplicity, we avoid the issue by considering a particular chronological list of forces $\pi$, which picks the smallest indexed vertex that can be forced. $\pi$ is a total ordering of the vertices such that the elements in $Z$ are ordered arbitrarily, and smaller than all elements in $V\setminus Z$. For each $v,u\in V\setminus Z$, $v\leq_\pi u$ if $v$ is forced no later than $u$. The forcing graph is a graph where there is an edge between $u, v$ if $u$ forces $v$. It is well known that such graph is a set of node disjoint paths that begin in $Z$ \cite{Hogbe.2010}. Hence, if $v$ forces $u$, we can define $u^{\uparrow}=v$ to be the \emph{forcing parent}, and correspondingly, $u$ is called the \emph{forcing child} of $u^{\uparrow}$. A vertex is a \emph{terminal}, if it does not have a forcing child. Let $T$ be the set of terminals, then $|T|=|Z|$. 

Consider the algorithm $\textsc{Forcing}(A,Z,b)$ as summarized in \autoref{alg:forcing}, which takes a matrix $A$, a vector $b$, and a zero forcing set $Z$ of $A$ as inputs, and updates $x_u$ for all $u\in V$ in each round. We set $x_u=0$ as initialization. 
Then, for each $u\in V\setminus Z$, we update $x_u$ according to $\left(b_{u^\uparrow} - \sum_{v\in N^+(u^{\uparrow})\setminus \set{u}} A_{u^{\uparrow},v} x_v\right)/ A_{u^{\uparrow},u}$, iteratively in the forcing order. Note that it is equivalent to setting $x_u \gets \left(b_{u^\uparrow} - A_{u^{\uparrow},*} x\right)/A_{u^\uparrow,u}$, since $x_u$ is previously $0$. 

    \begin{figure}[ht]
    \centering
    \begin{algorithm}
      \textsc{Forcing}$(A,Z,b)$\+
      \\  $x\gets 0$
      \\  for $u\in V\setminus Z$ ordered by some forcing sequence $\pi$\+
      \\    $x_u \gets  \left(b_{u^\uparrow} - A_{u^{\uparrow},*} x\right)/ A_{u^{\uparrow},u} $\-
      \\  return $x$
    \end{algorithm}
    \caption{The forcing operation.}
    \label{alg:forcing}
    \end{figure}

The following result formalizes the relation between the solution to a given linear system and the corresponding zero forcing set. 
\begin{proposition}\label{pro:forcing}
After the value of $x_u$ is updated in $\textsc{Forcing}(A,Z,b)$, we have $A_{u^\uparrow,*}x=b_{u^\uparrow}$ for all $v\leq_\pi u$. In particular, if $Ax=b$ has a solution where $x_Z=0$, then $\textsc{Forcing}(A,Z,b)$ finds such solution.
\end{proposition}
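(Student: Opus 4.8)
The plan is to prove the displayed invariant by induction along the forcing order $\pi$, and then to deduce the ``in particular'' claim by a second induction identifying the algorithm's output with any solution that vanishes on $Z$. Throughout I read the invariant as: immediately after the assignment to $x_u$ in \textsc{Forcing}$(A,Z,b)$, every forced vertex $v$ with $v\le_\pi u$ satisfies $A_{v^\uparrow,*}x=b_{v^\uparrow}$ for the current vector $x$.

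For the inductive step I would isolate two facts. First, a purely algebraic observation: just before $x_u$ is assigned we have $x_u=0$ (each coordinate is written once, in forcing order, and $x$ starts at $0$), so at that instant the $u$-term of $A_{u^\uparrow,*}x$ vanishes; the assignment $x_u\gets(b_{u^\uparrow}-A_{u^\uparrow,*}x)/A_{u^\uparrow,u}$ — legal because $u\in N^+(u^\uparrow)$ forces $A_{u^\uparrow,u}\neq 0$ — rearranges to exactly $A_{u^\uparrow,*}x=b_{u^\uparrow}$ for the updated $x$, establishing the equation for $v=u$. Second, I must check that writing $x_u$ does not spoil the equations already secured for forced $v<_\pi u$; since the only coordinate that changed is $x_u$, it suffices to show $A_{v^\uparrow,u}=0$ for each such $v$. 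This is exactly where the zero forcing structure enters: $v^\uparrow$ forces $v$, so at that earlier moment $v$ was the \emph{unique} non-blue out-neighbor of $v^\uparrow$; as $u$ is forced strictly later it was still non-blue then and $u\neq v$, hence $u\notin N^+(v^\uparrow)$, and also $u\neq v^\uparrow$ (a forcing parent is blue before its child), giving $A_{v^\uparrow,u}=0$ by the definition of $\mathcal{M}(G,\F)$. I expect this preservation step to be the crux; the remainder is bookkeeping, and the base case (the $\pi$-minimal forced vertex, whose parent lies in $Z$) is subsumed by the first fact.

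Combining the two facts completes the induction, and at termination the invariant yields $A_{v^\uparrow,*}x=b_{v^\uparrow}$ for every $v\in V\setminus Z$. Because the forcing graph is a disjoint union of paths rooted in $Z$, the map $v\mapsto v^\uparrow$ is a bijection from $V\setminus Z$ onto the non-terminals $V\setminus T$, so this says precisely that the output satisfies every non-terminal equation $A_{w,*}x=b_w$ with $w\notin T$.

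For the ``in particular'' statement, suppose $Ax=b$ admits a solution $x^{*}$ with $x^{*}_Z=0$. I would prove $x=x^{*}$ coordinatewise along $\pi$: on $Z$ both vanish, and for a forced $u$, every $w\in N^+(u^\uparrow)\setminus\{u\}$ together with $u^\uparrow$ is blue when $u$ is forced, hence already processed, so by the induction hypothesis the values of $x$ entering the assignment agree with those of $x^{*}$. Since $x^{*}$ satisfies the full system it in particular satisfies $A_{u^\uparrow,*}x^{*}=b_{u^\uparrow}$, and solving that equation for the $u$-coordinate produces exactly the formula the algorithm uses; therefore $x_u=x^{*}_u$. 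Hence $x=x^{*}$ is a genuine solution — which in addition supplies the terminal equations $A_{t,*}x=b_t$ for $t\in T$ that the invariant of the first part does not by itself guarantee.
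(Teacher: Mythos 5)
Your proof is correct, and its first half follows the same induction along the forcing order that the paper uses: the algebraic step showing that the assignment to $x_u$ enforces $A_{u^\uparrow,*}x=b_{u^\uparrow}$ is identical. Where you go beyond the paper is precisely in the two places its proof is terse. First, the paper simply asserts that $A_{v^\uparrow,*}x''=A_{v^\uparrow,*}x'$ for previously forced $v$ (where $x',x''$ are the vectors before and after the update); you supply the missing justification, namely $A_{v^\uparrow,u}=0$ because $u$ was still non-blue, and distinct from both $v$ and $v^\uparrow$, at the moment $v^\uparrow$ forced $v$ --- this is indeed the crux, and the only point where the zero forcing structure is used. Second, the paper stops at the invariant and treats the ``in particular'' claim as immediate, but, as you correctly observe, the invariant only yields the equations indexed by the non-terminal rows $V\setminus T$ (since $v\mapsto v^\uparrow$ maps the forced vertices onto $V\setminus T$); your second induction, identifying the output coordinatewise with any solution $x^{*}$ satisfying $x^{*}_Z=0$, is what actually delivers the terminal equations and hence that the output solves $Ax=b$. (Equivalently, one can run the same argument on the difference $x-x^{*}$.) So your write-up is best described as a strictly more complete rendering of the paper's argument rather than a different route; what the extra care buys is an actual proof of the ``in particular'' statement, which the paper leaves implicit.
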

\begin{proof}
One can prove it by induction. 
The trivial case is when no $x_u$ has been updated, then the conclusion is vacuously true. If some $x_u$ is updated by $\textsc{Forcing}(A,Z,b)$, we let $x'$ be the vector before the update, and $x''$ be the one after the update. Then we get $x'_v=x''_v$ for all $v\neq u$, and $x'_u=0$. And $A_{u,*}x'' = A_{u,*}x' + A_{u^\uparrow,u}x''_u = A_{u,*}x' + A_{u^\uparrow,u}(b_{u^\uparrow} - A_{u^\uparrow,*}x')/A_{u^{\uparrow},u} = b_{u^\uparrow}$. Also, for all $v <_\pi u$, $A_{v^\uparrow,*}x''=A_{v^\uparrow,*}x'=b_{v^\uparrow}$.
\end{proof}

Observe that the running time of $\textsc{Forcing}(A,Z,b)$ is $O(m)$, where $m$ is the number of non-zero elements in $A$.
\begin{theorem}\label{thm:forcingprop}
Let $Z$ be a zero forcing set of $A$. The following statements are true.
 \begin{enumerate}
   \item The columns in $V\setminus Z$ are linearly independent.
   \item If $Ax=b$, $Ay=b$, and $x_Z = y_Z$, then $x=y$.
   \item Given $x'\in F^V$ such that $\supp(x')\subseteq Z$. If $Ax=b$ for some $x$ such that $x_Z=x'_Z$, then $x=\textsc{Forcing}(A,Z,b-Ax')+x'$.
 \end{enumerate}
\end{theorem}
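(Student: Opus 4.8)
The plan is to derive all three statements from the two facts already on record: the defining property of a zero forcing set stated in the opening Proposition (any $x\in\ker(A)$ with $x_Z=0$ must vanish identically), together with the forcing guarantee of \autoref{pro:forcing}. Each part is a short reduction to one of these.

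For the first statement I would rule out a nontrivial linear dependence directly. Suppose $\sum_{j\in V\setminus Z} c_j A_{*,j}=0$ for scalars $c_j$. Extend the coefficients to a vector $c\in\F^V$ by setting $c_j=0$ for $j\in Z$. Then $Ac=\sum_{j\in V\setminus Z} c_j A_{*,j}=0$, so $c\in\ker(A)$, and by construction $c_Z=0$. The opening Proposition forces $c=0$, hence every $c_j$ is zero and the columns indexed by $V\setminus Z$ are linearly independent. The second statement is the same reduction applied to the difference: setting $z=x-y$, linearity gives $Az=b-b=0$, so $z\in\ker(A)$, while $x_Z=y_Z$ gives $z_Z=0$; the opening Proposition (equivalently, the independence just established) then yields $z=0$, i.e.\ $x=y$. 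I expect both of these to be immediate once the right vector is formed.

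The third statement is where the argument has to be assembled with a little care, and it is the step I would treat as the crux. The idea is the change of variables $w=x-x'$. By linearity $Aw=b-Ax'$, and because $x_Z=x'_Z$ we have $w_Z=0$; thus the system $Aw=b-Ax'$ admits a solution that vanishes on $Z$. \autoref{pro:forcing} then guarantees that $\textsc{Forcing}(A,Z,b-Ax')$ returns a vector $\hat w$ with $A\hat w=b-Ax'$, and $\hat w_Z=0$ holds by construction, since the algorithm initializes to $0$ and only updates coordinates outside $Z$. Now both $w$ and $\hat w$ solve the same system and agree on $Z$, so the uniqueness from the second statement gives $\hat w=w=x-x'$. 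Rearranging yields $x=\textsc{Forcing}(A,Z,b-Ax')+x'$, as claimed.

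The one point I would make explicit is the existence-plus-uniqueness pairing: \autoref{pro:forcing} only promises that \emph{some} zero-on-$Z$ solution is produced, and it is the second statement that certifies this solution is exactly $w$. The hypothesis $\supp(x')\subseteq Z$ plays a supporting role here, guaranteeing that $x'$ and the forcing output $\hat w$ (supported on $V\setminus Z$) have disjoint support, so the sum reassembles $x$ cleanly with $x'$ supplying the $Z$-coordinates and $\hat w$ the rest; the engine of the proof, however, is the change of variables combined with the uniqueness already in hand.
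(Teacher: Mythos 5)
Your proposal is correct, and it is worth comparing against what the paper actually does. For the first two statements the paper simply cites an external reference (\cite{kenter_error_2019}), noting only that the cited proof of statement 2 extends from $b=0$ to general $b$; you instead give self-contained one-line reductions to the opening Proposition (extend the dependence coefficients by zero on $Z$ to get a kernel vector vanishing on $Z$, respectively take $z=x-y$), which makes the theorem independent of the citation and is a perfectly valid alternative. For the third statement your argument is essentially the paper's: the same change of variables $w=x-x'$, the observation $Aw=b-Ax'$ with $w_Z=0$, and an appeal to \autoref{pro:forcing}. Where you differ is in rigor, to your credit: the paper writes ``Hence by \autoref{pro:forcing}, $\textsc{Forcing}(A,\pi,b-Ax')=x-x'$,'' silently identifying the forcing output with $w$, whereas you make explicit that \autoref{pro:forcing} only yields \emph{some} solution $\hat w$ vanishing on $Z$, and that the identification $\hat w=w$ requires the uniqueness of statement 2 applied to two solutions agreeing on $Z$. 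One small correction: the hypothesis $\supp(x')\subseteq Z$ is not actually needed for any step of the argument --- $w_Z=0$ follows from $x_Z=x'_Z$ alone, and $x=\hat w+x'$ is pure algebra --- so your closing remark about disjoint supports being what lets the sum ``reassemble $x$ cleanly'' attributes a load-bearing role to a hypothesis that is merely descriptive of how the theorem is used (with $x'$ built from a candidate $Z$-part); this is a cosmetic overstatement, not a gap.
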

\begin{proof}
See \cite{kenter_error_2019} for proof of the first two statements. Note in \cite{kenter_error_2019}, for the second statement, only the $b=0$ version was proven, but the same proof still works for general $b$.
For the third statement, if $Ax=b$ where $x_Z = x'_Z$, then $A(x'+(x-x')) = b$, or in other words, $A(x-x')=b-Ax'$. Hence by \autoref{pro:forcing}, $\textsc{Forcing}(A,\pi,b-Ax')=x-x'$, and therefore $(x-x')+x' = x$.
\end{proof}

The \emph{LSP decomposition} of $A\in \F^{m\times n}$ takes the form $A=LSP$, where $L\in \F^{m\times m}$ is a lower triangular matrix with value $1$ in the diagonals, $S\in \F^{m\times n}$ can be reduced to a upper triangular matrix if all zero rows are deleted and elements in the main diagonal after the zero row deletion is non-zero, and $P\in \F^{n\times n}$ is a permutation matrix. The LSP decomposition can be found in $O(mnr^{\omega-2})$ time \cite{Jeann.2006}, where $r$ is the rank of $A$. 

\begin{theorem}\label{thm:lsp}
  One can build a data structure on an $n$-square matrix $A$ in $O(n^\omega)$ time, such that $Ax=b$ can be solved in $O(n^2)$ time for each $b$. 
\end{theorem}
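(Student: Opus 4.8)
The plan is to build the data structure from the LSP decomposition of $A$ and then answer each query by a sequence of triangular-type solves. First I would compute $A=LSP$ in the preprocessing phase. Since $A$ is $n$-square, the decomposition is over an $n\times n$ matrix of rank $r\le n$, so the stated bound $O(mnr^{\omega-2})$ becomes $O(n^2 r^{\omega-2})=O(n^\omega)$ (using $\omega\ge 2$, hence $r^{\omega-2}\le n^{\omega-2}$). I would store $L$, $S$, $P$ together with the combinatorial data extracted from $S$: the set $R$ of indices of its nonzero rows, and the pivot columns $C=\{c_1<\cdots<c_r\}$, where $c_j$ is the leading nonzero column of the $j$-th nonzero row. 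By the definition of the LSP form, $S_{R,C}$ is an $r\times r$ upper triangular matrix with nonzero diagonal, hence invertible; reading off $R$ and $C$ costs $O(n^2)$, so the whole preprocessing stays within $O(n^\omega)$ and the stored data has size $O(n^2)$.

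Given a query $b$, I would solve $Ax=b$ in three stages exploiting $A=LSP$. Writing $z=Px$, the system is equivalent to $Ly=b$, then $Sz=y$, and finally $x=P^{-1}z=P^\top z$. Stage one solves $Ly=b$ by forward substitution; since $L$ is unit lower triangular this is an $O(n^2)$ computation with a unique $y$. Stage two solves $Sz=y$. Because every row of $S$ outside $R$ is zero, the system is feasible exactly when $y_i=0$ for all $i\notin R$, which I would check first (this is where rank deficiency is detected and infeasibility reported). When feasible, I would set the non-pivot coordinates $z_{[n]\setminus C}$ to $0$ and solve the square upper triangular system $S_{R,C}\,z_C=y_R$ by back substitution in $O(r^2)=O(n^2)$ time. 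Stage three applies the permutation, $x=P^\top z$, in $O(n)$ time. The total query cost is therefore $O(n^2)$.

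Correctness follows by chaining the three stages: with $x=P^\top z$ we have $Px=z$, so $Ax=LS(Px)=LSz=Ly=b$, where the last two equalities use $Sz=y$ and $Ly=b$. The feasibility characterization is justified by observing that the columns of $S$ are supported on $R$, while the columns indexed by $C$ restrict to a basis of $\F^R$ (they form the invertible matrix $S_{R,C}$); hence the column space of $S$ is exactly $\{v:\supp(v)\subseteq R\}$, so $Sz=y$ is solvable iff $y$ vanishes off $R$.

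I expect the main obstacle to be handling the rank-deficient case cleanly rather than the running time, which is immediate from the LSP bound. Concretely, the care points are that (i) the echelon structure of $S$ must be read so that $S_{R,C}$ is guaranteed to be triangular and invertible, and (ii) the back substitution must be ordered consistently with the increasing pivot columns $c_1<\cdots<c_r$. Once the pivot data is fixed in preprocessing, every subsequent query reuses it, so the decomposition is amortized over all right-hand sides and each solve is a pure $O(n^2)$ substitution.
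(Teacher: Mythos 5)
Your proposal is correct and takes essentially the same route as the paper: compute the LSP decomposition $A=LSP$ in $O(n^\omega)$ preprocessing time, then answer each query by solving $Ly=b$ and then the (permuted) triangular system for $x$ by substitution in $O(n^2)$ time. The only difference is that you spell out the rank-deficiency check and the pivot bookkeeping explicitly, details which the paper delegates to its citation for back substitution with the LSP form.
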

\begin{proof}
  Computing the LSP decomposition of $A\in \F^{n\times n}$ takes $O(n^\omega)$ time. Given the LSP decomposition, solving $Ax=b$ can be obtained through solving $Ly=b$ and $SPx=y$ through back substitution  \cite{ibarraGeneralizationFastLUP1982}, which takes $O(n^2)$ time. We just have to store all the involved matrices, which takes $O(n^2)$ space.
  \end{proof}

\section{Solving $Ax=b$ through zero forcing set}
In this section, we first introduce the \emph{core matrix} that serves as the key part of our algorithm, followed by its theoretical guarantees. Based on the core matrix, we then present the detailed algorithm for solving $Ax=b$ with a given zero forcing set of size $k$, as well as the corresponding computational analysis. 

To greatly simplify the exposition, we set up the instance we are working with. Throughout this section, we fix a directed graph $G=(V,E)$, a field $\F$, a matrix $A\in \mathcal{M}(G,\F)$, a zero forcing set $Z$ of $G$, a forcing order $\pi$, and the terminals $T$ under the forcing order. 
Since the operations are performed on indices, without loss of generality, we assume that $V=[n]$ and $Z=[k]$.

Let $L(x)=\textsc{Forcing}(A,\pi,x)$, $R(x) = x-AL(x)$. As we will see, both are linear functions. It can be also observed that $\supp(R(b))\subset T$ and $\supp(L(b))\subset V\setminus Z$ for all $b$. $|V|=n$, $|Z|=k$, and the number of non-zero elements in $A$ is $m$.

\subsection{Core matrix}
As \autoref{thm:forcingprop} suggests, the solution to $Ax=b$ can be obtained by knowing only $x_Z$. Hence, it is natural to think of finding the correct $x_Z$ instead of acquiring the full information of $x$ for solving $Ax=b$.

Let $a_v=A_{*,v}$ be the column of $A$ indexed by $v$. Define a $k\times k$ matrix $B\in \F^{T\times Z}$ as $B=(RA)_{T,Z}$. In other words, the $v$th column equals $R(a_v)_T$. The matrix $B$ is called the \emph{core matrix} of $A$.
We will show that for any $b$, $Ax=b$ if and only if $Bx_Z=R(b)_T$. Hence, solving the equation $By=R(b)_T$ along with some post processing is sufficient to give a solution of $Ax=b$.

First, we prove the linearity of functions $L$ and $R$. 
\begin{lemma}\label{lem:linear}
$L$ and $R$ are linear functions.
\end{lemma}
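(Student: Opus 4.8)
The plan is to prove that $L$ is linear by induction along the forcing order $\pi$, and then obtain the linearity of $R$ for free. Since $R(x) = x - AL(x)$ is the difference between the identity map and the composition $A \circ L$, once $L$ is shown to be linear, $R$ is immediately linear as a combination of linear maps. So the entire burden is on $L$.

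To analyze $L$, I would track the single update rule of $\textsc{Forcing}$. Writing $u_1, \dots, u_{n-k}$ for the vertices of $V \setminus Z$ listed in the forcing order $\pi$, the algorithm sets $L(b)_{u_i} = \bigl(b_{u_i^\uparrow} - \sum_{v \in N^+(u_i^\uparrow) \setminus \{u_i\}} A_{u_i^\uparrow, v}\, x_v\bigr)/A_{u_i^\uparrow, u_i}$, where $x_v$ is the value of coordinate $v$ at the moment $u_i$ is processed. The claim I would prove by induction on $i$ is that $b \mapsto L(b)_{u_i}$ is a linear function of $b$, with coefficients depending only on $A$. The key structural fact driving the induction is that, at the moment $u_i$ is forced by $u_i^\uparrow$, the vertex $u_i$ is the unique non-blue out-neighbor of $u_i^\uparrow$; hence every $v \in N^+(u_i^\uparrow) \setminus \{u_i\}$ is already blue, i.e.\ either $v \in Z$ or $v = u_j$ for some $j < i$. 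In the first case $x_v = 0$, because the initialization is $x \gets 0$ and coordinates in $Z$ are never updated, so $\supp(L(b)) \subseteq V \setminus Z$; in the second case $x_v = L(b)_{u_j}$ is already linear in $b$ by the inductive hypothesis. Thus $L(b)_{u_i}$ is a fixed $A$-dependent linear combination of the single coordinate $b_{u_i^\uparrow}$ and the already-linear functions $L(b)_{u_j}$ with $j < i$, so it too is linear. The base case $i = 1$ is the same computation with an all-$Z$ neighbor set, giving $L(b)_{u_1} = b_{u_1^\uparrow}/A_{u_1^\uparrow,u_1}$.

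The only real obstacle is the bookkeeping in the inductive step: one must be certain that the value $x_v$ appearing in the update for $u_i$ is exactly the final value $L(b)_v$ and not a stale intermediate value. This follows because each vertex of $V \setminus Z$ is updated exactly once, so once $v$ is forced its coordinate never changes again, and every contributing $v$ is either in $Z$ or was forced strictly before $u_i$. Making this precise — using that the forcing graph is a union of node-disjoint paths rooted in $Z$, so that the update for $u_i$ only references already-finalized coordinates — is the heart of the argument; after that, coordinatewise linearity of $L$, and hence the linearity of $R$, is a mechanical consequence.
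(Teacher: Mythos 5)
Your proof is correct, and it rests on the same underlying observation as the paper's --- each forcing update is a linear operation on the current state --- but the formalization is genuinely different. You proceed coordinatewise, by induction along the forcing order: $L(b)_{u_i}$ is a fixed ($A$-dependent) linear combination of $b_{u_i^\uparrow}$ and of values $x_v$ at vertices $v$ that are blue when $u_i$ is forced, and each such $x_v$ is either identically $0$ (if $v\in Z$, since those coordinates are never written) or equal to the finalized value $L(b)_{u_j}$ for some $j<i$ (each coordinate is written exactly once), which is linear in $b$ by the inductive hypothesis. The paper instead encodes a single forcing step as an explicit $2n\times 2n$ matrix $M(v)$ acting on the augmented state $(x,b)\in\F^{2n}$, and realizes $L(b)$ as the first $n$ coordinates of the product $M(v_n)\cdots M(v_{k+1})\hat{s}$, where $\hat{s}$ is a linear image of $b$; linearity then follows from closure of linear maps under composition. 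Your induction is the more elementary route, and it makes explicit the bookkeeping fact (only already-blue, finalized coordinates are referenced) that the matrix encoding handles implicitly; the paper's encoding, in exchange, produces a concrete matrix representation of $L$ (hence of $R=I-AL$), which matches how $L$ and $R$ are subsequently used as matrices ($B=(RA)_{T,Z}$, $R_{T,*}b$) and prefigures the grid section, where the core matrix is obtained precisely by powering such a one-step matrix ($M=N(n)^{n-1}$). One small point of care: the algorithm as written updates $x_u \gets (b_{u^\uparrow}-A_{u^\uparrow,*}x)/A_{u^\uparrow,u}$, and the row product $A_{u^\uparrow,*}x$ may include a diagonal term $A_{u^\uparrow,u^\uparrow}x_{u^\uparrow}$ that is absent from the neighbor sum you quote; this costs nothing, since $u_i^\uparrow$ is itself blue, so its coordinate falls under the same dichotomy (zero, or linear by induction) and your argument absorbs it unchanged.
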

\begin{proof}
The equation $Rb = (I-AL)b$ indicates that if $L$ is linear, $R$ must be linear. Hence we only need to show the linearity of $L$.

The proof can be derived by encoding each forcing operation using matrices. Define $D(u)\in \F^{n\times n}$ for each $u$ with $D(u)_{u,*} = A_{u^\uparrow,*}/A_{u^\uparrow,u}$, $D_{v,v}=1$ if $v\neq u$, and $0$ everywhere else. Also, define $E(u)\in \F^{n\times n}$ to be the matrix that is $0$ everywhere except $E(u)_{u^\uparrow,u}=1/A_{u^\uparrow,u}$.
We can then define $M(v) \in \F^{(2n)\times (2n)}$ for each $v\in V$, which is written in a block form as follows
\[
\begin{bmatrix}
-D(u) & E(u)_{u,u^\uparrow}\\
0_n & I_n
\end{bmatrix},
\]
where $I_n$ and $0_n$ are the identity and the zero matrix of order $n$, respectively.

Let $s=(x_1,\ldots,x_n,b_1,\ldots,b_n)^T$. Observe that $M(v)s$ and $s$ differ only in one coordinate, $x_u$, and the value of $x_u$ is $b_{u^\uparrow}/A_{u^\uparrow,u}-A_{u^\uparrow,*}/A_{u^\uparrow,u}x$, precisely the relation in forcing.
 
Now, consider $\hat{s} = (0,\ldots,0,b_1,\ldots,b_n)^T\in\F^{2n}$, which can be obtained by a linear transformation from $b$. 
Let $\hat{s}' = M(v_n)\ldots M(v_{k+1}) \hat{s}$, where $v_{k+1},\ldots,v_n$ are the vertices in $V\setminus Z$ ordered by the forcing order. Hence, the vector $\hat{s}'$ can also be obtained by a linear transformation from $b$. 
Assigning the values of the first $n$ coordinates of $\hat{s}'$ to $x'$, $x'$ then coincides with the value of $L(b)$. This implies that $x'$ is obtained from $b$ through a linear transformation, and thus, $L$ is linear. 
\end{proof}

Based on the linearity above, we will abuse the notation and let $L$ and $R$ to represent the matrices for the corresponding linear transformations.

\begin{lemma}\label{lem:cv}
$R a_v=0$ for $v\not \in Z$.
\end{lemma}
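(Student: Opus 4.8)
The plan is to recognize that the statement $R a_v = 0$ is equivalent to saying that $L(a_v)$ is a genuine solution of the system $Ax = a_v$. Indeed, by definition $R(a_v) = a_v - A L(a_v)$, so $R a_v = 0$ precisely when $A L(a_v) = a_v$, i.e. when the output of $\textsc{Forcing}(A,\pi,a_v)$ solves the linear system with right-hand side $a_v$. This reframing is what makes the earlier machinery immediately applicable.

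The key observation I would make is that $a_v$ is itself a column of $A$, namely $a_v = A_{*,v} = A e_v$, where $e_v$ is the $v$th standard basis vector. Thus $x = e_v$ is an explicit solution of $Ax = a_v$. Crucially, since $v \notin Z$ and $Z = [k]$, the support of $e_v$ is $\{v\}$, which is disjoint from $Z$; hence $(e_v)_Z = 0$. This exhibits a solution of $Ax = a_v$ that vanishes on the zero forcing set, which is exactly the hypothesis required to invoke \autoref{pro:forcing}.

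From here the argument closes in one step. By \autoref{pro:forcing}, since $Ax = a_v$ admits a solution with $x_Z = 0$, the algorithm $\textsc{Forcing}(A,\pi,a_v)$ returns such a solution; that is, $L(a_v) = \textsc{Forcing}(A,\pi,a_v)$ satisfies $A L(a_v) = a_v$. Substituting into the definition of $R$ gives $R a_v = a_v - A L(a_v) = a_v - a_v = 0$, as desired. (One could additionally note that by statement~2 of \autoref{thm:forcingprop} the solution vanishing on $Z$ is unique, forcing $L(a_v) = e_v$, though this stronger identity is not needed for the claim.)

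I do not expect a serious obstacle here; the proof is essentially a direct application of \autoref{pro:forcing}. The only point requiring care is the recognition that $a_v = A e_v$ and that $e_v$ vanishes on $Z$ exactly because $v \notin Z$ — this is the hinge on which the whole argument turns, and it is precisely where the assumption $v \notin Z$ enters. Without it, $e_v$ need not vanish on $Z$ and the conclusion would fail, consistent with the fact that $B = (RA)_{T,Z}$ is defined using only the columns indexed by $Z$.
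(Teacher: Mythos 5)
Your proof is correct and takes essentially the same route as the paper: both rest on the observation that $e_v$ solves $Ax = a_v$ while vanishing on $Z$ (since $v \notin Z$), so the forcing algorithm recovers a solution and hence $R a_v = a_v - AL(a_v) = 0$. If anything, your write-up is more careful than the paper's, which simply asserts that $\textsc{Forcing}(A,Z,a_v)$ outputs $x_v = 1$ and $x_u = 0$ for $u \neq v$ without explicitly invoking \autoref{pro:forcing}, whereas you make that invocation explicit.
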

\begin{proof}
The forcing algorithm given in \autoref{alg:forcing} shows that if $v\neq u$, we will then have $x_v=1$, and $x_u=0$. Note that $Ax=a_v$, and therefore $R a_v = a_v - Ax = 0$.
\end{proof}

Before delving into the key theorem, we introduce some useful notations and definitions that will be repeatedly used in the remaining part of this section.
Let $\rank(A)=r\geq n-k$. Define $A' = A_{*,[k]}\in  \F^{n\times k}$, and $A''=A_{*,[n]\setminus [k]} \in \F^{n\times (n-k)}$.
To facilitate the analysis, we
further decompose $A'$ into a $1\times 2$ block matrix form $A'=[A'_1\quad A'_2]$ such that $A'_1 = [a_1\quad \ldots\quad a_p]$, $A'_2 = [a_{p+1}\quad \ldots \quad a_k]$, $\rank([A'_1 \quad A''])=r$, and $\rank(A'_1) = p=r-(n-k)$. The matrix $A$ can be then rewritten as
\begin{equation}
\label{eq:A}
A = [A'_1\quad A'_2\quad A''].
\end{equation}

The following result presents the rank-preserving property of $A'_1$ under the linear mapping $R$. Based on this property, together with the assumption of the existence of the solution to $Ax=b$, we can then arrive at the main theoretical result of this work, which will be detailed in \autoref{thm:epic} and \autoref{thm:mainprop}.

\begin{lemma}\label{lem:rank}
$\rank(R A'_1) = \rank(A'_1)$.
\end{lemma}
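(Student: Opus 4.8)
The plan is to prove the stronger statement that the linear map $R$ is injective on the column space $\operatorname{col}(A'_1)$. Since a linear map can never increase rank, $\rank(RA'_1)\le\rank(A'_1)$ holds automatically, so injectivity on $\operatorname{col}(A'_1)$ supplies the reverse inequality and hence equality. Concretely, I would start from an arbitrary dependency $\sum_{j=1}^{p} c_j\,Ra_j=0$ and aim to force $c=0$; this shows the $p$ vectors $Ra_1,\dots,Ra_p$ are linearly independent, and therefore $\rank(RA'_1)=p=\rank(A'_1)$.

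The first key step is a dimension count showing that $\operatorname{col}(A'_1)$ and $\operatorname{col}(A'')$ intersect only at the origin. By \autoref{thm:forcingprop}(1) the columns indexed by $V\setminus Z$ are linearly independent, so $\rank(A'')=n-k$. Combining this with the construction $\rank(A'_1)=p=r-(n-k)$ and $\rank([A'_1\ A''])=r$ yields $\rank([A'_1\ A''])=\rank(A'_1)+\rank(A'')$, which is exactly the condition for the two column spaces to be in direct sum. Hence $\operatorname{col}(A'_1)\cap\operatorname{col}(A'')=\{0\}$.

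The second step exploits the support structure of the forcing map. Setting $v=A'_1c=\sum_{j\le p}c_j a_j$, linearity of $R$ (\autoref{lem:linear}) rewrites the assumed dependency as $Rv=0$, i.e. $v=AL(v)$. The forcing solution satisfies $\supp(L(v))\subseteq V\setminus Z$, so $v=AL(v)=A''\,L(v)_{V\setminus Z}\in\operatorname{col}(A'')$. On the other hand $v\in\operatorname{col}(A'_1)$ by definition. Therefore $v\in\operatorname{col}(A'_1)\cap\operatorname{col}(A'')=\{0\}$, so $A'_1c=0$, and since the $p$ columns of $A'_1$ are independent we conclude $c=0$.

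I expect the crux to be the interplay between the two descriptions of $v$: the same vector is simultaneously a combination of the selected $Z$-columns $a_1,\dots,a_p$ and, via the forcing identity $v=AL(v)$ together with $\supp(L(v))\subseteq V\setminus Z$, a combination of the $V\setminus Z$-columns. Getting the support bookkeeping right — so that $AL(v)$ genuinely lands in $\operatorname{col}(A'')$ — is the delicate part, after which the direct-sum property from the rank count closes the argument immediately. Note this route does not appear to require \autoref{lem:cv} directly, although that lemma (asserting $R$ annihilates the $V\setminus Z$-columns) is fully consistent with the picture and could give an alternative phrasing.
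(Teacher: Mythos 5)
Your proof is correct and is essentially the paper's argument: both rest on the observation that $(R-I)s=-AL(s)\in\spn(A'')$ for every $s$ (because $\supp(L(s))\subseteq V\setminus Z$), combined with the fact that $\spn(A'_1)$ and $\spn(A'')$ intersect trivially; the paper packages this as the decomposition $RA'_1=A'_1+QA'_1$ with $Q=R-I$, while you run the same facts contrapositively as injectivity of $R$ on $\spn(A'_1)$. If anything, your write-up is slightly more complete, since you justify the trivial intersection explicitly via the rank count $\rank(A'')=n-k$ (from \autoref{thm:forcingprop}) and $p=r-(n-k)$, a step the paper's proof asserts without justification.
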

\begin{proof}
For any $s\in\mathbb{F}^n$, there exist $\gamma_{k+1},\cdots,\gamma_{n}$, such that
\[
s+\sum_{i=k+1}^n\gamma_{i}a_i=Rs,
\]
which implies that $(R-I)s\in\spn(A'')$. Define another linear mapping $Q$ as $Q = R-I$, and we have $\spn Q\subset \spn A''$. Then, we can write $RA_1'$ as
\[
RA_1' = (Q+I)A_1'=QA_1'+A_1',
\]
where $QA_1'\in \spn Q\subset \spn A''$. 
Hence, the columns of $QA_1'$ and the ones of $A_1'$ are linearly independent, which immediately gives $\rank(R A'_1) = \rank(A'_1)$.
\end{proof}

The following theorem plays a pivotal role in the theoretical guarantees for our algorithm. 
\begin{theorem}\label{thm:epic}
Given $A\in \F^{n\times n}$ of the form (\ref{eq:A}).
Let $M\in \F^{k\times n}$ ($k\leq n$) such that $A''\in \ker(M)$, and $\rank(MA'_1)=\rank(A'_1)$.
For $b\in \F^n$, suppose that $Ax=b$ has a solution.
If $MA'y=M b$ for some $y\in \F^k$, then there exists $x\in \F^n$ such that $Ax=b$ and $x_Z=y$.
\end{theorem}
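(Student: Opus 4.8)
The plan is to reduce the entire statement to a single containment. Observe that if I can find $x_{V\setminus Z}$ with $A'' x_{V\setminus Z} = b - A'y$, then $x=(y,x_{V\setminus Z})$ satisfies $Ax = A'x_Z + A''x_{V\setminus Z} = A'y + (b-A'y) = b$ and $x_Z=y$ by construction, which is exactly what is claimed. Such an $x_{V\setminus Z}$ exists precisely when $b-A'y\in\spn(A'')$, so everything comes down to proving $b-A'y\in\spn(A'')$. I would set $c = b-A'y$ and record the two facts I have about it: first, $Mc = Mb - MA'y = 0$ by the hypothesis $MA'y = Mb$, so $c\in\ker M$; second, since $Ax=b$ has a solution we have $b\in\spn(A)$, and $A'y\in\spn(A')\subseteq\spn(A)$, whence $c\in\spn(A)$.

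The key structural step is to convert the rank hypotheses into a direct-sum decomposition of $\spn(A)$. Since the columns indexed by $V\setminus Z$ are linearly independent (\autoref{thm:forcingprop}), $\rank(A'')=n-k$; together with $\rank(A'_1)=p$ and $\rank([A'_1\ A''])=r=p+(n-k)$, the number of columns of $[A'_1\ A'']$ that are independent equals the sum of the two individual ranks, so $\spn(A'_1)\cap\spn(A'')=\{0\}$ and $\spn([A'_1\ A''])=\spn(A'_1)\oplus\spn(A'')$. Because $\rank([A'_1\ A''])=r=\rank(A)$ while $\spn([A'_1\ A''])\subseteq\spn(A)$, equality of dimensions forces these column spaces to coincide, giving $\spn(A)=\spn(A'_1)\oplus\spn(A'')$. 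I would then split $c=c_1+c''$ with $c_1\in\spn(A'_1)$ and $c''\in\spn(A'')$.

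Now I combine the decomposition with the behavior of $M$. The hypothesis $A''\in\ker(M)$ means every column of $A''$ lies in $\ker M$, so $\spn(A'')\subseteq\ker M$ and hence $Mc''=0$; therefore $0=Mc=Mc_1$. Writing $c_1=A'_1 w$ for some $w$, the condition $\rank(MA'_1)=\rank(A'_1)=p$ makes $MA'_1$ a matrix of full column rank, so $MA'_1 w=0$ forces $w=0$, i.e. $c_1=0$. Thus $c=c''\in\spn(A'')$, which completes the reduction and produces the desired $x$.

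The main obstacle I anticipate is the bookkeeping that turns the three rank conditions into the clean pair of facts I actually use: the direct sum $\spn(A)=\spn(A'_1)\oplus\spn(A'')$ and the injectivity of $M$ restricted to $\spn(A'_1)$. Once those are established, the remainder is a short projection-and-kill computation. I would also emphasize that the redundant block $A'_2$ is never analyzed directly; it is absorbed entirely into the identity $\spn(A)=\spn([A'_1\ A''])$, which is precisely what the choice $p=r-(n-k)$ is engineered to guarantee.
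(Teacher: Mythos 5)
Your proof is correct, and it reorganizes the argument in a way that differs meaningfully from the paper's. The paper works entirely in coordinates: it expands $A_2' = A_1'C_1' + A''C''$ and $b = A_1'C_b' + A''C_b''$, splits $y$ into blocks $(y_1,y_2)$, uses $A''\in\ker(M)$ plus the full column rank of $MA_1'$ to extract the scalar identity $y_1 + C_1'y_2 - C_b' = 0$, and then writes down the explicit solution $x = (y_1,\, y_2,\, C_b'' - C''y_2)$. You instead collapse everything onto the single vector $c = b - A'y$, prove the coordinate-free statement $\spn(A) = \spn(A_1') \oplus \spn(A'')$, and conclude $c \in \spn(A'')$ from the two facts that $M$ annihilates $\spn(A'')$ and is injective on $\spn(A_1')$. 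The underlying mechanism is the same in both proofs (kill the $A''$ directions with $M$, use injectivity of $M$ on $\spn(A_1')$ to see the remaining component vanishes), but your version never decomposes $A_2'$ or $y$ at all, which eliminates the coefficient bookkeeping that makes the paper's ending somewhat awkward to read; the trade-off is that your solution $x$ is only specified up to a choice of preimage of $c$ under $A''$, whereas the paper's $x_3$ is an explicit formula. One small remark: you invoke \autoref{thm:forcingprop} to get $\rank(A'') = n-k$, which is legitimate under the section's standing assumption that $Z$ is a zero forcing set, but it is not even needed --- the form (\ref{eq:A}) already forces it, since $r = \rank([A_1'\ A'']) \leq \rank(A_1') + \rank(A'') \leq p + (n-k) = r$ gives equality throughout. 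Using that observation instead would make your proof self-contained for any matrix of the form (\ref{eq:A}), independent of the zero-forcing context.
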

\begin{proof}
Since $A_2'\in\spn(A_1', A'')$, we can rewrite $A_2'$ as
$A_2' = A_1'C_1' + A''C''$ where $C_1'\in\mathbb{F}^{p\times(k-p)}$ and $C''\in\mathbb{F}^{(n-k)\times(k-p)}$ are coefficient matrices. 
The existence of the solution to $Ax=b$ indicates that $b\in\spn(A_1', A'')$. Similarly, we can write $b = A_1'C_b'+A''C_b''$ where $C_b'\in\mathbb{F}^{p}$ and $C''_b\in\mathbb{F}^{n-k}$ are coefficient vectors. Then, we have
\begin{align*}
  M[A_1'\quad A_2'] y &= M[A_1'\quad A_1'C_1' + A''C''] y= M[A_1'\quad  A_1'C_1']y,\\
  Mb &= M(A'_1C_b'+A''C_b'')=MA_1'C_b',
\end{align*}
where the last equalities of the above two formulas follow from $A''\in \ker(M)$.
Then, the equation $MA'y = Mb$ can be written as
$[MA_1'\quad MA_1'C_1'] y = MA_1'C_b'$.
Decomposing $y\in\mathbb{F}^k$ into $y=\begin{bmatrix}y_1 \\y_2\end{bmatrix}$ with $y_1\in \F^p, y_2\in \F^{k-p}$ further
leads to a new homogeneous linear system
\[
MA_1'(y_1+C_1' y_2 - C_b')=0.
\]
The condition $\rank(MA'_1)=\rank(A'_1)$ gives $\rank(MA'_1)=p$, or to say, $MA'_1$ is of full rank. Thus, we have $y_1+C_1'y_2 - C_b' = 0$.

In a similar manner, we decompose $x\in\mathbb{F}^n$ into three subvectors $x_1\in \F^p$,$x_2\in \F^{k-p}$, and $x_3\in \F^{n-k}$. Then $Ax=b$ gives
\[
[A_1'\quad A_1'C_1' + A''C''\quad A''] \begin{bmatrix}x_1 \\x_2\\x_3\end{bmatrix} = A_1'C_b' + A''C_b''.\]
Taking $x_1=y_1$, $x_2=y_2$ yields
\[
A_1'y_1+A_1'C_1'y_2 + A''C''y_2 + A''x_3 = A_1'C_b' + A''C_b'',
\]
which can be rearranged as
\[
A_1'(y_1 + C_1'y_2-C_b') + A''(C''y_2+x_3-C_b'') = 0.
\]
Since $C''y_2+x_3-C_b''=0$, we get
\[
A''(C''y_2+x_3-C_b'') = 0.
\]
Now, we can set $x_3 = C_b'' - C''y_2$, which immediately gives a solution of $Ax=b$ of the form 
\[
x = \begin{bmatrix}y_1 \\y_2\\C_b'' - C''y_2\end{bmatrix}.
\]
\end{proof}

\begin{theorem}\label{thm:mainprop}
Let $B$ be the core matrix of $A$, and $Ax=b$ has at least one solution. If $By=R_{T,*}b$, then there exists $x$ such that $Ax=b$ and $x_Z=y$. 
\end{theorem}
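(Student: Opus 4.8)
The plan is to recognize \autoref{thm:mainprop} as a direct instance of \autoref{thm:epic}, with the abstract matrix $M$ taken to be $R_{T,*}$, the $k\times n$ submatrix consisting of the rows of $R$ indexed by the terminal set $T$ (recall $|T|=|Z|=k$, so the dimensions match). Once $M:=R_{T,*}$ is fixed, I would check that the two structural hypotheses of \autoref{thm:epic}, namely $A''\in\ker(M)$ and $\rank(MA_1')=\rank(A_1')$, both hold, and then translate the hypothesis $By=R_{T,*}b$ into the hypothesis $MA'y=Mb$ of \autoref{thm:epic}.

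For the first hypothesis, the columns of $A''$ are exactly the $a_v$ with $v\notin Z$, and \autoref{lem:cv} gives $Ra_v=0$ for every such $v$; hence $RA''=0$, and restricting to the rows in $T$ yields $MA''=R_{T,*}A''=0$, i.e.\ $A''\in\ker(M)$. For the second hypothesis, \autoref{lem:rank} already gives $\rank(RA_1')=\rank(A_1')$, so it remains to argue that passing from $RA_1'$ to its $T$-rows $MA_1'=(RA_1')_{T,*}$ does not lose rank. This is where the support property $\supp(R(b))\subseteq T$ is used: every column $Ra_v$ of $RA_1'$ is supported on $T$, so all rows of $RA_1'$ outside $T$ vanish, and deleting zero rows preserves rank. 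Thus $\rank(MA_1')=\rank(RA_1')=\rank(A_1')$, as required.

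It then remains to match the hypotheses. The key identity is $B=MA'$: since $Z=[k]$ we have $A'=A_{*,Z}$, so the columns of $(RA)_{*,Z}$ are $Ra_1,\dots,Ra_k$, i.e.\ $(RA)_{*,Z}=RA'$; taking the rows in $T$ gives $B=(RA)_{T,Z}=(RA')_{T,*}=R_{T,*}A'=MA'$. Since also $R_{T,*}b=Mb$ by definition of $M$, the assumed equation $By=R_{T,*}b$ is literally $MA'y=Mb$. Invoking \autoref{thm:epic} (whose remaining hypothesis, the existence of a solution to $Ax=b$, is assumed here) then produces an $x$ with $Ax=b$ and $x_Z=y$, which is the claim.

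I expect the only nontrivial step to be the rank bookkeeping in the second hypothesis: one must combine \autoref{lem:rank} with the observation that $R$ maps into vectors supported on $T$, so that restricting $RA_1'$ to its $T$-rows is a rank-preserving operation rather than a potential rank drop. Everything else is a matter of unwinding the definition $B=(RA)_{T,Z}$ and applying \autoref{lem:cv}.
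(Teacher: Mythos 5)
Your proposal is correct and follows essentially the same route as the paper: both instantiate \autoref{thm:epic} with $M=R_{T,*}$, use \autoref{lem:rank} together with the observation that $R$'s image is supported on $T$ (so restricting to the $T$-rows preserves rank), and read off $B=MA'$ from the definition of the core matrix. You are somewhat more explicit than the paper — in particular you spell out the verification $A''\in\ker(M)$ via \autoref{lem:cv}, which the paper leaves implicit — but this is a matter of detail, not of approach.
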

\begin{proof}
Since $\supp(Rx),\supp(R_{T,*}x)\in T$ for all $x$, we have $\rank(RA'_1)=\rank(R_{T,*}A'_1)$. By \autoref{thm:epic} with $M=R_{T,*}$ and \autoref{lem:rank}, we conclude that such solution always exists.
\end{proof}

\subsection{The algorithm}
We first provide the algorithm \textsc{FindCore}, see \autoref{alg:findcore}, for computing the
core matrix effectively, with the computational cost given in \autoref{thm:findcore}.
   \begin{figure}[ht]
    \centering
    \begin{algorithm}
      \textsc{FindCore}$(A,Z,b)$\+
      \\  $\pi, T\gets$ the forcing ordering and terminal set
      \\  for $v\in Z$\+
      \\    $z \gets$ \textsc{Forcing}$(A,Z,A_{*,v})$
      \\    $B_{*,v} \gets (A_{*,v})_T-A_{T,*}z$\-
      \\  Compute the LSP decomposition of $B$
      \\  return $B$
    \end{algorithm}
    \caption{Find the core matrix.}
    \label{alg:findcore}
    \end{figure}
    
\begin{theorem}\label{thm:findcore}
The algorithm \textsc{FindCore} takes $O(mk)$ time and $O(k^2)$ space.
\end{theorem}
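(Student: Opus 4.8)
The plan is to charge the running time to three phases—computing the forcing data $(\pi,T)$, the main loop that assembles the columns of $B$, and the final LSP decomposition—and to show that the loop, which dominates, costs $O(mk)$ by exploiting sparsity throughout. For the space bound I will track which objects are persistent versus transient.

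First I would argue that the preliminary line, computing $\pi$ and $T$, runs in $O(n+m)$ time. This is a direct simulation of the forcing process: maintain for each vertex a counter of its non-blue out-neighbors, keep a queue of blue vertices whose counter has dropped to $1$, and whenever such a vertex forces its unique non-blue out-neighbor $w$, record $w^{\uparrow}$, append $w$ to $\pi$, and decrement the counters of the in-neighbors of $w$ (using a reverse adjacency list built once in $O(n+m)$). Each edge is touched a constant number of times, so this phase is $O(n+m)$, which is absorbed into $O(mk)$; the terminals $T$ are then the vertices with no forcing child, read off in $O(n)$.

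The heart of the argument is the main loop, which runs exactly $|Z|=k$ times. In each iteration the call $\textsc{Forcing}(A,Z,A_{*,v})$ costs $O(m)$ as already observed. The step I must analyze carefully is $B_{*,v}\gets (A_{*,v})_T - A_{T,*}z$, because a naive dense evaluation of $A_{T,*}z$ would cost $\Theta(kn)$ per iteration and blow the budget. The key observation is that $z=L(A_{*,v})$ is supported on $V\setminus Z$ and that $A_{T,*}z$ can be evaluated by iterating only over the non-zero entries of $A$ lying in rows indexed by $T$; since $A$ has $m$ non-zeros in total, this is $O(m)$, and extracting the sparse subvector $(A_{*,v})_T$ is likewise $O(m)$. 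Hence each iteration is $O(m)$ and the whole loop is $O(mk)$.

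Finally I would account for the LSP decomposition of the $k\times k$ matrix $B$. Applying the stated $O(mnr^{\omega-2})$ bound at dimension $k$ (equivalently, \autoref{thm:lsp} instantiated on a $k$-square matrix) gives $O(k^\omega)$; more precisely, using that $\rank(B)=\rank(RA'_1)=p\le k$—which follows from \autoref{lem:rank} together with $Ra_v=0$ for $v\notin Z$ (\autoref{lem:cv}), since then $RA'_2\in\spn(RA'_1)$—the cost is only $O(k^2 p^{\omega-2})$. The main obstacle, and the point I expect to need the most care, is verifying that this term is subsumed by $O(mk)$: this holds exactly when $m=\Omega(k^{\omega-1})$, which is the regime of interest and in particular holds for the lights-out grid where $m=\Theta(k^2)$. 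For the space bound, the only persistent objects are $B$ and its LSP factors, each a $k\times k$ matrix, giving the claimed $O(k^2)$; the length-$n$ working vector used inside $\textsc{Forcing}$ and the forcing bookkeeping are transient (and $O(n)=O(k^2)$ in the intended regime).
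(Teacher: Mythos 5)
Your analysis of the dominant phase coincides with the paper's proof: $k$ calls to \textsc{Forcing} at $O(m)$ each, plus an $O(m)$ sparse evaluation of $(A_{*,v})_T - A_{T,*}z$ per column, giving $O(mk)$; the paper's proof says exactly this and nothing more (it does not even discuss computing $\pi$ and $T$, which your $O(n+m)$ simulation fills in). Where you genuinely depart from the paper is the final phase: the paper's proof is silent about the cost of the LSP decomposition of $B$ and about space. Your treatment of that step is more careful, and it exposes a real issue. The LSP step costs $O(k^2 p^{\omega-2})$ --- your rank computation $\rank(B)=p$ via \autoref{lem:rank} and \autoref{lem:cv} is correct --- and this is \emph{not} dominated by $O(mk)$ in general. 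A concrete failure: take $A$ diagonal with all entries non-zero; then $E=\emptyset$, the only zero forcing set is $Z=V$, so $k=n$ and $m=n$, the core matrix is $B=A$ itself, and the generic LSP computation costs $O(n^\omega)$ while $mk=n^2$. So your conditional resolution (the bound holds when $m=\Omega(k^{\omega-1})$, which is true in the grid application where $m=\Theta(k^2)$) is the honest statement of what can be proved; the theorem as written should carry an extra $O(k^2p^{\omega-2})$ term, and the paper's own proof simply omits this accounting. You found a gap in the paper rather than left one yourself.

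One small flaw on the space side: your parenthetical claim that $O(n)=O(k^2)$ ``in the intended regime'' is false in general --- a path on $n$ vertices has a zero forcing set of size $k=1$ --- but it is also unnecessary. The $O(k^2)$ bound refers to the persistent data structure ($B$ and its LSP factors), which is how the main theorem phrases it (``a data structure of size $O(k^2)$''); the length-$n$ vectors inside \textsc{Forcing} are transient working storage over the input and need not be charged to the data structure.
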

\begin{proof}
Computing $L(A_{*,v})$ for all $v\in Z$ takes $O(m(1+|Z|))=O(mk)$ time. The computation of $B_{*,v}$ for a $v\in Z$ consists of a vector-vector difference and a matrix-vector product, which can be implemented in $O(m)$ linear time. Thus, computing the core matrix takes $O(mk)$ time in total. 
\end{proof}

Next, the algorithm \textsc{SolveLinearSystemGivenCore}$(A,Z,B,b)$ in \autoref{alg:solver2} shows how the solution of $Ax=b$ is obtained using the computed core matrix and all the information about the zero forcing set. The computational cost is provided in \autoref{thm:solve}.

    \begin{figure}[ht]
    \centering
    \begin{algorithm}
      \textsc{SolveLinearSystemGivenCore}$(A,Z,B,b)$\+
      \\  $\pi, T\gets$ forcing sequence and terminal set $T$
      \\  $z\gets \textsc{Forcing}(A,Z,b)$
      \\  $b'\gets b_T - A_{T,*}z$
      \\  $y \gets$ solution to $By = b'$
      \\  if $y$ does not exists\+
      \\    return \texttt{"NO SOLUTION"}\-
      \\  $x'\gets $ the vector where $x'_Z=y$ and 0 everywhere else
      \\  $x\gets x'+\textsc{Forcing}(A,Z,b-Ax')$
      \\  if $Ax=b$\+
      \\    return $x$\-
      \\  else\+
      \\    return \texttt{"NO SOLUTION"}\-
      \end{algorithm}
    \caption{Solve a linear system $Ax=b$ given a core matrix.}
    \label{alg:solver2}
    \end{figure}

By \autoref{thm:mainprop}, we know there exists a solution that matches $y$ at the zero forcing set. By \autoref{thm:forcingprop}, we obtain the remaining part of the solution through forcing.

\begin{theorem}\label{thm:solve}
Given a matrix $A\in\mathbb{F}^{n\times n}$ of $m$ non-zero elements, its zero forcing set of size $k$, and a core matrix $B$ represented by its LSP decomposition. The system of linear equations $Ax=b$ can be solved in $O(k^2+m)$ time.
\end{theorem}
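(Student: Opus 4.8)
The plan is to verify that the algorithm \textsc{SolveLinearSystemGivenCore} in \autoref{alg:solver2} is both correct and runs within the claimed $O(k^2+m)$ bound. Since the algorithm is already fully specified, I would separate the argument into a running-time analysis and a correctness analysis, treating the verification line $Ax=b$ as the crux of the latter.

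For the running time, I would account for each line of \autoref{alg:solver2}. The two calls to \textsc{Forcing} each cost $O(m)$, as observed after \autoref{pro:forcing}. Forming $b'=b_T-A_{T,*}z$ is a matrix-vector product against the rows $A_{T,*}$ together with a vector subtraction, costing $O(m)$. Solving $By=b'$ using the supplied LSP decomposition of the $k\times k$ core matrix $B$ costs $O(k^2)$ by \autoref{thm:lsp}. Building $x'$ from $y$ is $O(k)$; computing $Ax'$ is a combination of the $k$ columns indexed by $Z$, whose nonzeros total at most $m$, hence $O(m)$; and the final verification $Ax=b$ is one more matrix-vector product in $O(m)$. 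Summing gives $O(k^2+m)$.

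For correctness I would argue in two directions. When $Ax=b$ is solvable, note that $b'=R(b)_T$, and the reduced system $By=b'$ is solvable: using $Ra_v=0$ for $v\notin Z$ from \autoref{lem:cv}, any genuine solution $x$ satisfies $Bx_Z=(RAx)_T=R(b)_T=b'$. The algorithm then obtains some $y$ with $By=b'$; by \autoref{thm:mainprop} there exists $x$ with $Ax=b$ and $x_Z=y$, and by the third item of \autoref{thm:forcingprop} this $x$ equals $x'+\textsc{Forcing}(A,Z,b-Ax')$, which is exactly the vector the algorithm reconstructs. Hence the test $Ax=b$ succeeds and a correct solution is returned.

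The main obstacle, and the reason the explicit check $Ax=b$ is indispensable, is the case where $Ax=b$ has no solution. Here \autoref{thm:mainprop} offers no guarantee, and because $R$ projects onto the strictly smaller coordinate set $T$, the reduced system $By=b'$ may still be feasible and yield a spurious $y$. I would dispose of both no-solution cases uniformly: if $By=b'$ is itself infeasible the algorithm reports failure immediately, and otherwise the reconstructed $x$ fails the explicit test $Ax=b$, so the algorithm reports failure again. This verification costs only $O(m)$ and thus does not affect the overall bound, so both the running-time and the correctness claims hold.
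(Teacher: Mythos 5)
Your proof is correct and follows the same route as the paper: a line-by-line cost accounting of \autoref{alg:solver2} (the $O(m)$ calls to \textsc{Forcing}, the $O(m)$ matrix--vector products, and the $O(k^2)$ solve against the given LSP decomposition via \autoref{thm:lsp}), with correctness resting on \autoref{thm:mainprop} and the third item of \autoref{thm:forcingprop}. You are, however, more thorough than the paper itself: the paper's proof of this theorem consists only of the running-time analysis, with correctness relegated to an informal remark preceding the statement. In particular, you supply a step the paper leaves implicit, namely completeness of the reduction: when $Ax=b$ is solvable, the reduced system $By=b'$ is also solvable, because any genuine solution $x$ satisfies $Bx_Z=(RAx)_T=(Rb)_T=b'$ by \autoref{lem:cv} and linearity of $R$. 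Without this observation, the algorithm could in principle answer \texttt{"NO SOLUTION"} on a feasible instance, and \autoref{thm:mainprop} alone (which presupposes that some $y$ with $By=R(b)_T$ has already been found) does not rule that out. Your explicit handling of the infeasible case through the final verification $Ax=b$ likewise matches the algorithm's design and stays within the $O(k^2+m)$ bound, so both halves of the claim are fully justified.
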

\begin{proof}
Following the algorithm in \autoref{alg:solver2},
the computation of $b'= R_{T,*}b$ by forcing takes $O(m)$ time. By \autoref{thm:lsp}, if the LSP decomposition of the matrix is given, then solving a system of linear equations on $k$ variables and $k$ equations takes $O(k^2)$ time. Once the solution of the linear system for the core matrix is obtained, we can find the solution to the original problem through forcing in $O(m)$ time. The total running time is then $O(k^2+m)$. 
\end{proof}

Combining \autoref{thm:findcore} and \autoref{thm:solve}, we finally arrive at our main theorem.
\main*

\section{Lights out game on a grid}
In this section, we show that the lights out game in an $n\times n$ grid can be solved in $O(n^\omega\log n)$ time. Consider the lights out game on an $n\times n$ grid graph. We number the vertices in position $(i,j)$ with index $(i-1)n+j$. Hence, the vertices are in $[n^2]$. 
Let $Z=[n]$, we see that $Z$ is a zero forcing set. If we apply \autoref{thm:main} directly to the adjacency matrix of the grid graph, then in this special case, we will obtain precisely the algorithm of \cite{wang}. Since $m=n^2$ and $k=n$, the algorithm takes $O(n^3)$ time. The computational bottleneck is the calculation of the core matrix. Fortunately, by exploiting the repeated structure in grids, we can significantly improve the running time of computing the core matrix to $O(n^\omega\log n)$.

The forcing operation for the lights out game is greatly simplified, because the field is $\F_2$. In this case, $x_u=1$ or $x_u=0$ can be interpreted as pressing the button at vertex $u$ or not, respectively. The $b_u$ can be understood as the state of light at vertex $u$, where the value $1$ means on, and  $0$ means off.
When operating on the $u$th vertex, the forcing operation sets $x_u=1$ if and only if $b'_{u^{\uparrow}}=1$. Here $b'$ is the state of the board after applying all previous button presses. In other words, the forcing operation is iteratively setting $x_u=b'_{u^{\uparrow}}$.

We want to encode the operation of forcing, where we are given the state of first and second rows and want to compute the force operations on the second row ensuring the states of the first row can be all $0$. It should then output the state of the second row and the state of the third row after all the button presses.
To this end, we define such matrix to be $N(n)\in \F_2^{2n\times 2n}$. 
The vertices of first row are indexed from $1$ to $n$. The vertex of the second row below the vertex $i$ is $n+i$. One can easily verify that $N$ could be written in the following block form 
\[
N(n) = \begin{bmatrix}
N'(n) & I_n\\
I_n & 0_n
\end{bmatrix},
\]
where $N'(n)$ is the matrix satisfying that $N'(n)_{i,j}=1$ if and only if $|i-j|\leq 1$.

\begin{figure}
\centering
\[ \left[ \begin{array}{c:c} \begin{array}{c c c} 
1 & 1 & 0\\
1 & 1 & 1\\
0 & 1 & 1\\
\end{array} &  
\begin{array}{c c c}
1& 0& 0\\
0& 1& 0\\
0& 0& 1\\ 
\end{array} \\ \hdashline
\begin{array}{c c c}
1& 0& 0\\
0& 1& 0\\
0& 0& 1\\ 
\end{array} & \begin{array}{c c c}
0& 0& 0\\
0& 0& 0\\
0& 0& 0\\ 
\end{array} \end{array} \right] \]
    \caption{The matrix $N(3)$. }
    \label{fig:theN}
\end{figure}

Now, define $M=N^{n-1}(n)$, which requires $O(n^\omega\log n)$ time to compute using exponentiation by squaring. Let $a_j$ to be the $j$th column of $A$. Since $\supp(a_j)$ is a subset of the first two rows if $j\in [n]$, we first let $a_j' = (a_j)_{[2n]}$. Next, we compute $y=Ma_j'$. Let $t$ be the last $n$ coordinates of $y$. This then gives us the desired $R_{T,*}a_j$. Note that we can batch all the multiplications together, that is, we can compute $M[a_j'|j\in [n]] = M A_{[n],[2n]}$ in one go, which takes $O(n^\omega)$ running time, and recover the desired core matrix from it. The procedure described above is summarized in \autoref{alg:solverbgrid}. 

    \begin{figure}[ht]
    \centering
    \begin{algorithm}
      \textsc{FindGridCore}$(n)$\+
      \\  $A\gets $ adjacency matrix of an $n\times n$ grid
      \\  $M\gets (N(n))^{n-1}$
      \\  return $(M A_{[n],[2n]})_{[n],[n]}$
    \end{algorithm}
    \caption{Find core matrix for a grid graph.}
    \label{alg:solverbgrid}
    \end{figure}

At this point, we have showed a $O(n^{\omega}\log n)$ time algorithm to solve the lights out problem on an $n\times n$ grid.
Moreover, the algorithm can also be applied to $n\times m$ grids with $n\leq m$, and the running time becomes $O(n^\omega \log m + nm)$ accordingly.

\bibliographystyle{plain}
\bibliography{lights_out}
\end{document}